\def\pproblem#1#2#3#4{
\begin{flushleft} 
  \noindent 
  {\sc #1}\\
  {\bf Instance: }#2.\\
  {\bf Promise: }#3.\\
  {\bf Question: }#4?\\
\end{flushleft}}
\newcommand{\SNNHalting}{\ensuremath{\mathcal{S}(R_T,R_S)}-{\sc Halting}}
\newcommand{\SNNAccept}{\ensuremath{\mathcal{S}((\mathcal{O}(n),\mathcal{O}(n)),R_S)}-{\sc Net\-work Halting}}
\newcommand{\NwFlow}{{\sc Network Flow}}
  \providecommand\BibTeX{{%
    \normalfont B\kern-0.5em{\scshape i\kern-0.25em b}\kern-0.8em\TeX}}}
\begin{document}

\title[Complexity of SNNs]{On the computational power and complexity of Spiking Neural Networks}

\author{Johan Kwisthout}
\authornote{Supported by a grant from Intel Corporation.}
\email{j.kwisthout@donders.ru.nl}
\author{Nils Donselaar}
\authornote{Supported by NWO grant 612.001.601.}
\email{n.donselaar@donders.ru.nl}
\affiliation{%
  \institution{Donders Institute for Brain, Cognition, and Behaviour}
  \streetaddress{Montessorilaan 3}
  \postcode{6525 HR}
  \city{Nijmegen}
}

\begin{abstract}
The last decade has seen the rise of neuromorphic architectures based on artificial spiking neural networks, such as the SpiNNaker, TrueNorth, and Loihi systems. The massive parallelism and co-locating of computation and memory in these architectures potentially allows for an energy usage that is orders of magnitude lower compared to traditional Von Neumann architectures. However, to date a comparison with more traditional computational architectures (particularly with respect to energy usage) is hampered by the lack of a formal machine model and a computational complexity theory for neuromorphic computation. In this paper we take the first steps towards such a theory. We introduce spiking neural networks as a machine model where---in contrast to the familiar Turing machine---information and the manipulation thereof are co-located in the machine. We introduce canonical problems, define hierarchies of complexity classes and provide some first completeness results. 
\end{abstract}

%%
%% The code below is generated by the tool at http://dl.acm.org/ccs.cfm.
%% Please copy and paste the code instead of the example below.
%%
\begin{CCSXML}
<ccs2012>
<concept>
<concept_id>10003752.10003753.10010622</concept_id>
<concept_desc>Theory of computation~Abstract machines</concept_desc>
<concept_significance>500</concept_significance>
</concept>
<concept>
<concept_id>10003752.10003777.10003779</concept_id>
<concept_desc>Theory of computation~Problems, reductions and completeness</concept_desc>
<concept_significance>500</concept_significance>
</concept>
</ccs2012>
\end{CCSXML}

\ccsdesc[500]{Theory of computation~Abstract machines}
\ccsdesc[500]{Theory of computation~Problems, reductions and completeness}

\keywords{neuromorphic computation, spiking neural networks, structural complexity theory}

\maketitle

\section{Introduction}

Moore's law \cite{Moore75} stipulates that the number of transistors in integrated circuits (ICs) doubles about every two years. With transistors becoming faster IC performance doubles every 18 months, at the cost of increased energy consumption as transistors are added \cite{Shalf15}. Moore's law is slowing down and is expected\footnote{\url{https://www.economist.com/technology-quarterly/2016-03-12/after-moores-law}.} to end by 2025. Traditional (``Von Neumann'') computer architectures separate computation and memory by a bus, requiring both data and algorithm to be transferred from memory to the CPU with every instruction cycle. This has been described, already in 1978, as the Von Neumann bottleneck \cite{Backus78}. While CPUs have grown faster, transfer speed and memory access lagged behind \cite{Hennessy11}, making this bottleneck an increasingly difficult obstacle to overcome.

In summary, while more data than ever before is produced, we are simultaneously faced with the end of Moore's law, limited performance due to the Von Neumann bottleneck, and an increasing energy consumption (with corresponding carbon footprint) \cite{OakRidge16}. These issues have accelerated the development of several generations of so-called neuromorphic hardware \cite{Mead90,Indivieri11,Davies18}. Inspired by the structure of the brain (largely parallel computations in neurons, low power consumption, event-driven communication via synapses) these architectures co-locate computation and memory in artificial (spiking) neural networks. The spiking behavior allows for potentially energy-lean computations \cite{Maass14} while still allowing for in principle any conceivable computation \cite{Maass96}. However, we do not yet fully understand the potential (and limitations) of these new architectures. Benchmarking results are suggesting that event-driven information processing (e.g. in neuromorphic robotics or brain-computer-interfacing) and energy-critical applications might be suitable candidate problems, whereas `deep' classification and pattern recognition (where spiking neural networks are outperformed by convolutional deep neural networks) and applications that value precision over energy usage may be less natural problems to solve on neuromorphic hardware. Although several algorithms have been developed to tackle specific problems, there is currently no insight in the potential and limitations of neuromorphic architectures. 

The emphasis on {\em energy} as a vital resource, in addition to the more traditional {\em time} and {\em space}, suggests that the traditional models of computation (i.e., Turing machines and Boolean circuits) and the corresponding formal machinery (reductions, hardness proofs, complete problems etc.) are ill-matched to capture the computational power of spiking neural networks. What is lacking is a unifying computational framework and structural complexity results that can demonstrate what can and cannot be done in principle with bounded resources with respect to convergence time, network size, and energy consumption \cite{Haxhimusa14}. Previous work is mostly restricted to variations of Turing machine models within the Von Neumann architecture \cite{Graves14} or energy functions defined on threshold circuits \cite{Uchizawa09} and as such unsuited for studying spiking neural networks. This is nicely illustrated by the following quote:

\begin{quote}``It is \ldots likely that an entirely new computational theory paradigm will need to be defined in order to encompass the computational abilities of neuromorphic systems'' \cite[p.29]{OakRidge16}
\end{quote}

In this paper we propose a model of computation for spiking neural network-based neuromorphic architectures and lay the foundations for a {\em neuromorphic complexity theory}. In Section \ref{Machine_model} we will introduce our machine model in detail. In Section \ref{Resources} we further elaborate on the resources time, space, and energy relative to our machine model. In Section \ref{Complexity_classes} we will explore the complexity classes associated with this machine model and derive some basic structural properties and hardness results. We conclude the paper in Section \ref{Conclusion}.

\section{Machine model}
\label{Machine_model}

In order to abstract away from the actual computation on a neuromorphic device, in a similar vein as the Turing machine acts as an abstraction of computations on traditional hardware architectures, we introduce a novel notion of computation based on spiking neural networks. We will first elaborate on the network model and then proceed to translate that to a formal machine model.

\subsection{Spiking neural network model}

We will first introduce the specifics of our spiking neural network model, which is a variant of the leaky integrate-and-fire model introduced by Severa and colleagues at Sandia National Labs \cite{Severa16}. This model defines a discrete-timed spiking neural network as a labeled finite digraph $\mathcal{S} = (N, S)$ comprised of a set of neurons $N$ as vertices and a set of synapses $S$ as arcs. Every neuron $k\in N$ is a triple $(T_k \in \mathbb{Q}_{\geq0}, R_k \in \mathbb{Q}_{\geq0}, m_k \in [0,1])$ representing respectively threshold, reset voltage, and leakage constant, while a synapse $s\in S$ is a $4$-tuple $(k \in N, l \in N, d \in \mathbb{N}_{>0}, w \in \mathbb{Q})$ for the pre-synaptic neuron, post-synaptic neuron, synaptic delay and weight respectively. We will use notation $S_{k,l} = (d,w)$ as a shorthand to refer to specific synapses and shorthands $d_{kl}$ and $w_{kl}$ to refer to the synaptic delay and weight of a specific synapse.

The basic picture is thus that any spikes of a neuron $k$ are carried along outgoing synapses $S_{k,l}$ to serve as inputs to the receiving neurons $l$. The behavior of a spiking neuron $k$ at time $t$ is typically defined using its membrane potential $u_k (t) = m_ku_k(t-1) + \sum_j w_{jk}x_j(t - d_{jk}) + b_k$ which is the integrated weighted sum of the neuron's inputs (taking into account synaptic delay) plus an additional bias term. Whether a neuron spikes or not at any given time is dependent on this membrane potential, either deterministically (i.e., the membrane potential acts as a threshold function for the spike) or stochastically (i.e., the probability of a spike being released is proportional to the potential); in this paper we assume deterministic spike responses. A spike $x_k(t)$ is abstracted here to be a singular discrete event, that is, $x_k(t) = 1$ if a spike is released by neuron $x_k$ at time $t$ and $x_k(t) = 0$ otherwise.
%rectangular shape of length $\tau_k$, that is, $x_k(t) = 1$ if a spike was released within $\langle t - \tau_k, t]$ and $x_k(t) = 0$ otherwise. 
Figure \ref{snn_model} gives an overview of this spiking neuron model.

One can also define the spiking behavior of a neuron {\em programmatically} rather than through its membrane potential, involving so-called spike trains, i.e.~predetermined spiking schedules. Importantly, such neurons allow for a means of providing the input to a spiking neural network. Furthermore, for regular (non-programmed) neurons the bias term can be replaced by an appropriately weighted connection stemming from a continuously firing programmed neuron; for convenience this bias term will thus be omitted from the model. Figure \ref{snn_legend} introduces our notational conventions that we use for graphically depicting networks, along with a few simple networks as an illustration. As a convention, unless otherwise depicted, neuron and synapse parameters have their default values $R = 0$ and $T = m = d = w = 1$. %, where $\tau_k = 0$ in particular means that spikes are treated as singular events rather than as extended over time.
%
%
% --- I think we need not introduce \tau if we have earlier defined a spike as a singular event within a discrete time regime --- 
%
%

\begin{figure}[t]
\centering
\includegraphics[width=7.5cm]{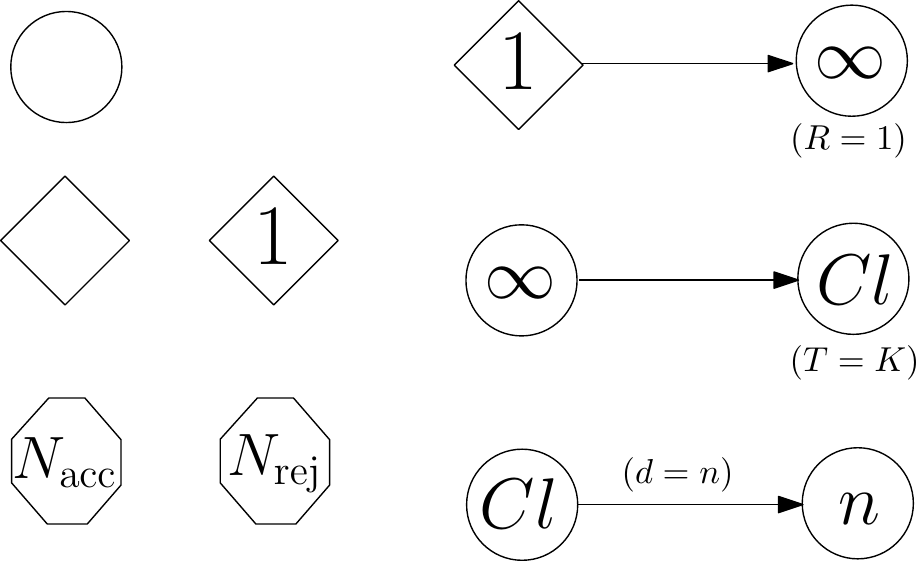}
\caption{Notational conventions for (top to bottom on the left) a regular neuron, a programmed neuron, dedicated notation for programmed neurons firing once at timestep $t=0$, and dedicated acceptance and rejection neurons. To the right we show simple circuits realizing a continuously firing neuron, a clock neuron firing every $K$ time steps, and a temporal representation of a natural number $n < K$ relative to a clock.}
\label{snn_legend}
\rule{\columnwidth}{0.3mm}
\vspace{1mm}
\end{figure}

\begin{figure*}[t]
\centering
\includegraphics[width=16cm]{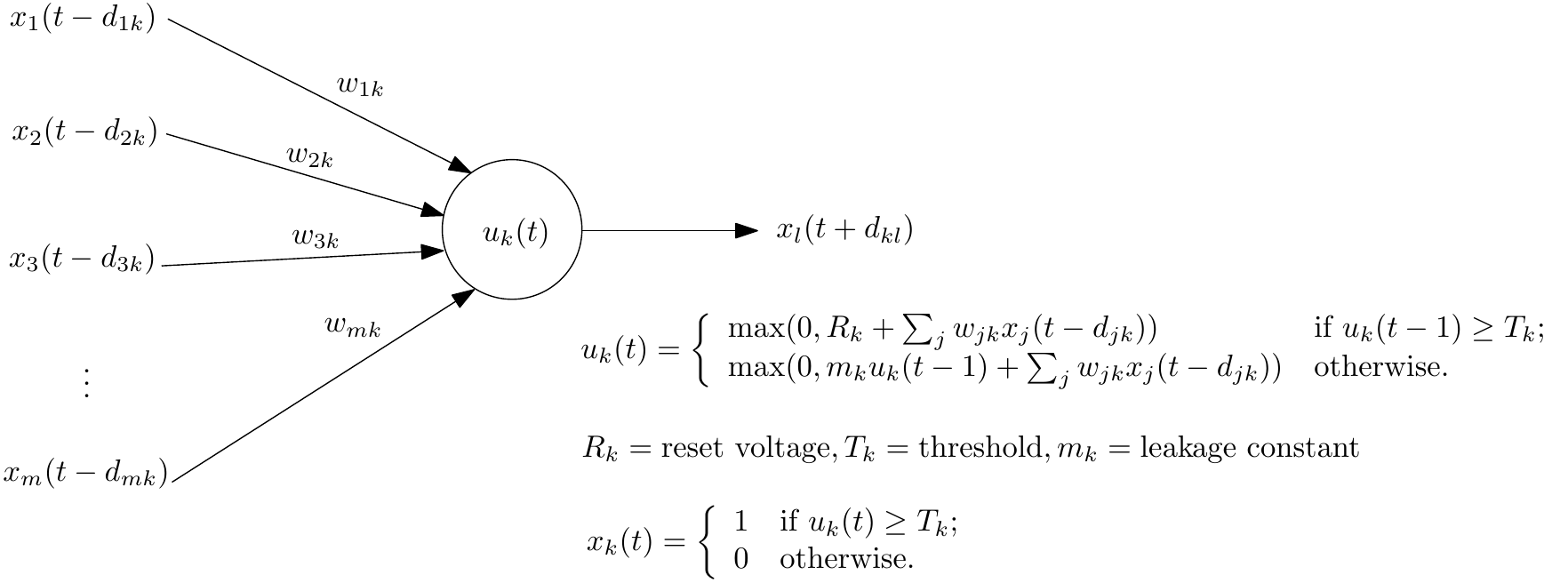}
\caption{A spiking neuron model with deterministic spiking behavior, describing the membrane potential $u(t)$ of a leaky integrate-and-fire neuron $k$ over time, based on the integrated weighted sum of incoming post-synaptic potentials. We enforce that the membrane potential is non-negative. Spikes are emitted when the membrane potential reaches its threshold and arrive at post-synaptic neurons $l$ with synaptic delay $d_{kl}$.}
\label{snn_model}
\rule{\columnwidth}{0.3mm}
\vspace{1mm}
\end{figure*}

For every spiking neural network $\mathcal{S}$ we require the designation of two specific neurons as the acceptance neuron $N_{\mathrm{acc}}$ and the rejection neuron $N_{\mathrm{rej}}$. The idea is that the firing of the corresponding neuron signifies acceptance and rejection respectively, at which point the network is brought to a halt. In the absence of either one of those neurons, we can impose a time constraint and include a new neuron which fires precisely when $N_{\mathrm{acc}}$ or $N_{\mathrm{rej}}$ (whichever is present) did not fire within time, thus adding the missing counterpart. In this way, we ensure that this model is a specific instantiation of Wolfgang Maass' generic spiking neural network model that was shown to be Turing complete \cite{Maass96}; hence, these spiking neural networks can in principle (when provided the necessary resources) compute anything a Turing machine can. More interesting is the question whether we can design smart algorithms that minimize the use of resources, for example, minimize energy usage within given bounds on time and network size. In order to answer this question we need to define a suitable formal abstraction of what constitutes a computational problem on a spiking neural network.

\subsection{Canonical problems}
\label{Canonical_problems}

Canonical computational problems on Turing machines typically take the following form: ``Given machine $\mathcal{M}$ and input $i$ on its tape, does $\mathcal{M}$ accept $i$ using resources at most $R$''? Here, $L$ is the language that $\mathcal{M}$ should accept, and the job of $\mathcal{M}$ is to decide whether $i \in L$. To translate such problems to a spiking neural network model one needs to define the machine model $\mathcal{S}$, the resources $R$ that $\mathcal{S}$ may use, how the input $i$ is encoded and what it means for $\mathcal{S}$ to accept the input $i$ using resources $R$. 

This is a non-trivial problem. In a Turing machine the input is typically taken to be encoded in binary notation and written on the machine's tape, while the algorithm for accepting inputs $i$ is represented by the state machine of $\mathcal{M}$. However, in spiking neural networks both the problem input and the algorithm operating on it are encoded in the network structure and parameters. While the most straightforward way of encoding the input is through programming a spike train on a set of input neurons, in some cases it might be more efficient to encode it otherwise, such as at the level of synaptic weights or even delays. In that sense a spiking neural network is different from both a Turing machine and a family of Boolean circuits as depicted in Table \ref{overview_machines}.

\begin{table*}[t]
\centering
\begin{tabularx}{\linewidth}{|p{44pt}|p{60pt}|X|X|X|}
\hline
 & Character of device(s) & Input representation $i$ & Resources $R$ & Canonical problem $Q$ \\
\hline
 Turing Machine $\mathcal{M}$ & One machine deciding all instances $i$. & Input is presented on the machine's tape. & Time, size of the tape, transition properties, acceptance criteria. & Does $\mathcal{M}$ decide whether $i \in L$ using resources at most $R$? \\
\hline
 Family of Boolean circuits $\mathcal{C}_{|i|}$ & One circuit for every input size $|i|$. & Input is represented as special gates. & Circuit size and depth, size and fan-in of the gates. & Does, for each $i$, the corresponding circuit $\mathcal{C}_{|i|}$ decide whether $i \in L$ using resources at most $R$? \\
\hline
 Collection of SNNs $\mathcal{S}_i$ & One network for every input $i$ or set of inputs $\{i_1, \ldots, i_m\}$. & Input is encoded in the network structure and/or presented as spike trains on input neurons. & Network size, time to convergence, total number of spikes. & Is there a resource-bounded Turing Machine $\mathcal{M}$ that, given $i$, generates (using resources $R_T$) $\mathcal{S}_i$ which decides whether $i \in L$ using resources at most $R_S$? \\
\hline
\end{tabularx}
\caption{Overview of machine models: Turing machines, Boolean circuits, and families of spiking neural networks.}
\label{overview_machines}
\end{table*}

Hence, we introduce a novel computational abstraction, suitable for describing the behavior of neuromorphic architectures based on spiking neural networks. We postulate that a network $\mathcal{S}_i$ encodes both the input $i$ and the algorithm deciding whether $i \in L$. What it means to decide a problem $L$ using a spiking neural network now becomes the following: that there is an $R_T$-resource-bounded Turing machine $\mathcal{M}$ that generates a spiking neural network $\mathcal{S}_i$ for every input $i$, such that $\mathcal{S}_i$ decides whether $i \in L$ using resources at most $R_S$. Note that in this definition the workload is {\em shared} between the Turing machine $\mathcal{M}$ and the network $\mathcal{S}_i$, and that the definition naturally allows for trading off generality of the network (accepting different inputs by the same network) and generality of the machine (generating different networks for each distinct input), with the traditional Turing machine and family of Boolean circuits being special cases of this trade-off. We can informally see the Turing machine $\mathcal{M}$ as a sort of {\em pre-processing} computation generating the spiking neural network $\mathcal{S}_i$ and then deferring the actual decision to accept or reject the input to this network. We will use the notation $\mathcal{S}(R_T,R_S)$ to refer to the class of decision problems that can be decided in this way. 

There is typically a trade-off between generality and efficiency of a network. Figure \ref{search_snn} provides a simple comparison between three implementations of the {\sc Array Search}-problem: given an array $A$ of integers and a number $i$, does $A$ contain $i$? Note that in the rightmost example a `circuit approach' is emulated. There is no straightforward way to simulate the entire computation for arrays of arbitrary size in the network other than simulating the behaviour of the machine and its input as per the proof in \cite{Maass96}.

\begin{figure*}[t]
    \centering
    \begin{subfigure}[b]{0.32\textwidth}
        \includegraphics[width=\textwidth]{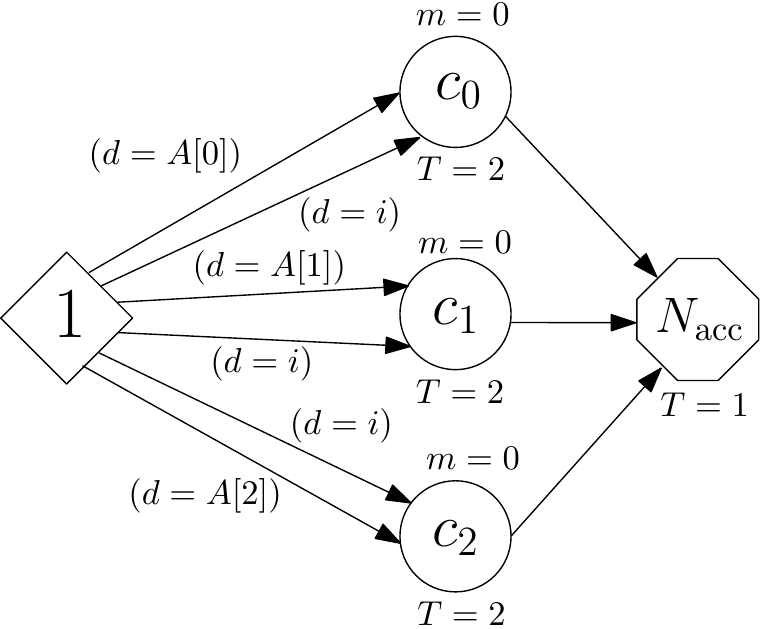}
        \caption{All computation in the network}
        \label{fig:search_allnw}
    \end{subfigure}
    \hfill
    \begin{subfigure}[b]{0.32\textwidth}
        \includegraphics[width=\textwidth]{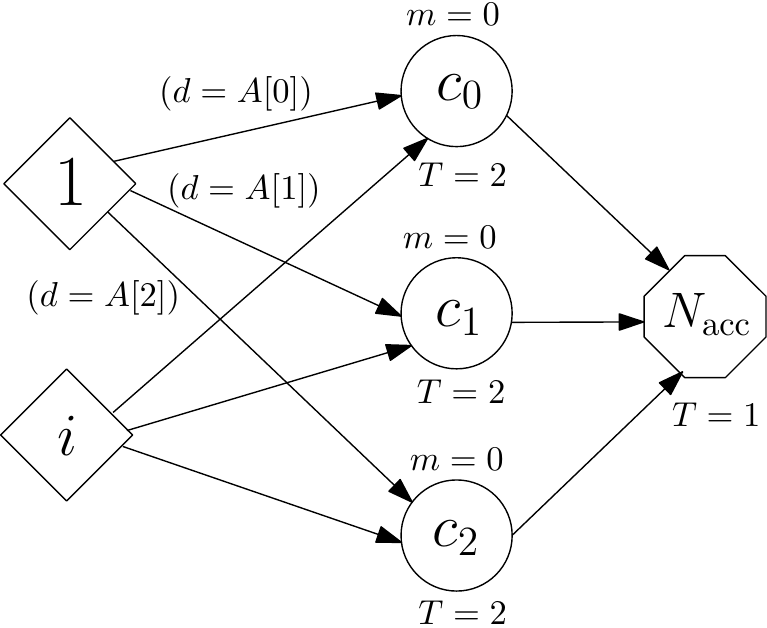}
        \caption{The value $i$ offered to the network as input}
        \label{fig:search_arraynw}
    \end{subfigure}
    \hfill
    \begin{subfigure}[b]{0.32\textwidth}
        \includegraphics[width=\textwidth]{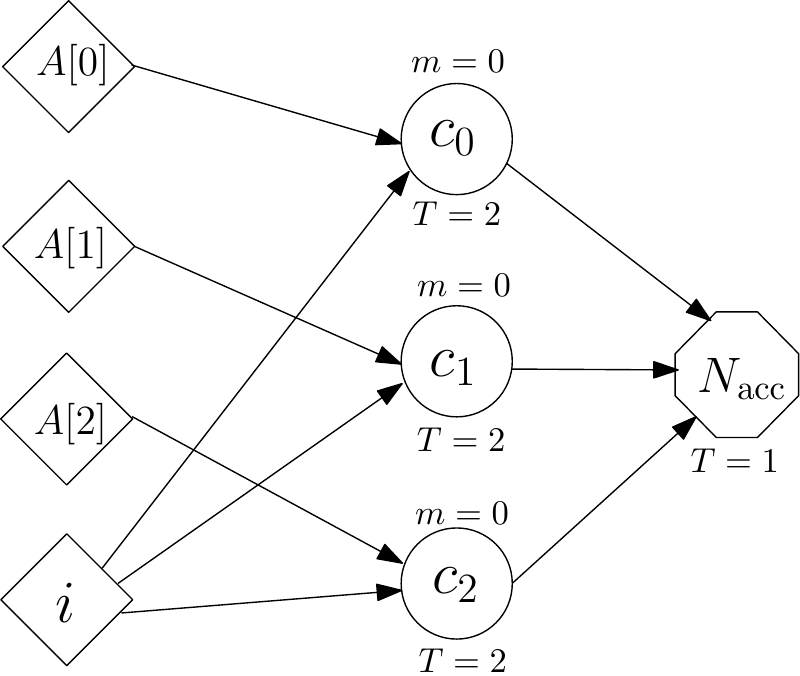}
        \caption{Both the value $i$ and the array $A$ offered to the network as input}
        \label{fig:search_compnw}
    \end{subfigure}
    \caption{Three spiking neural networks designed to decide whether an array $A$ of natural numbers contains $i$. Note that in network \ref{fig:search_allnw} both $A$ and $i$ as well as the parallel comparison are encoded in the network; in network \ref{fig:search_arraynw} the search value $i$ is offered as input (using a spike train consisting of a single spike with delay $i$), and in network \ref{fig:search_compnw} both the search value and the integers in the array are offered as input to the network, while the size of the array is fixed. The number of spikes used in the computation is respectively $\leq |n| + 2$, $\leq |n| + 3$, and $\leq 2|n| + 2$; the generality of the network increases but this comes at the prize of the increasing number of spikes in the computation.}
    \label{search_snn}
\end{figure*}

In addition to the `pre-processing' model we can also allow an {\em iterative} interaction between $\mathcal{M}$ and an oracle capable of deciding whether a spiking neural network $\mathcal{S}$ accepts, such that the computation carried out by $\mathcal{M}$ is interleaved with oracle calls whose results can be acted on accordingly. Before we can properly define this {\em interactive} model of neuromorphic computation, we will first discuss the class $\mathcal{S}(R_T,R_S)$ in further detail. In Section \ref{Complexity_classes} we will cover the formal aspects involved in these definitions; we start by considering the resources that we wish to allocate to these machines.

\section{Resources}
\label{Resources}

We denote the resource constraints of the Turing machine with the tuple $R_T = (\mathsf{TIME}, \mathsf{SPACE})$. We allow the decision of the network to take resources $R_S$; this can be further specified to be a tuple $R_S = (\mathsf{TIME}, \mathsf{SPACE}, \mathsf{ENERGY})$, referring to the number of time steps $\mathcal{S}_i$ may use, the total network size $|\mathcal{S}_i|$, and total number of spikes that $\mathcal{S}_i$ is allowed to use, all as a function of the size of the input $i$. Note that in practice $\mathsf{ENERGY} \leq \mathsf{TIME} \times \mathsf{SPACE}$ since any neuron can fire at most once per time step. Furthermore, note that similarly $\mathsf{SPACE}$ is upper bounded by $R_T$, as for example we cannot in polynomial time construct a network with an exponential number of neurons. We assume in the remainder that the constraints can be described by their asymptotic behavior, and in particular that they are closed under scalar and additive operations; we will describe $R_T$ and $R_S$ as being {\em well-behaved} if they adhere to this assumption. (To clarify, here we restrict ourselves to considering only deterministic resources for both $R_T$ and $R_S$, just as we consider only deterministic membrane potential functions.) 

Observe that we really need the pre-processing to be part of the definition of the model for neuromorphic computation to meaningfully define resource-bounded computations, as we are allowed in principle to define a unique network per instance $i$. Otherwise, the mapping between $i$ and $\mathcal{S}_i$ could be the trivial and uninformative mapping: 

$$i \rightarrow \mathcal{S}_i:  \left\{
\begin{array}{ll}
(N_{\{\mathrm{acc}\}},\varnothing) & \mbox{if $i \in L$;} \\
(N_{\{\mathrm{rej}\}},\varnothing) & \mbox{otherwise.} \\
\end{array}
\right.$$

\subsection{Clock and meter}

We will assume that all Turing machines $\mathcal{M}$ have access to a {\em clock} and a {\em ruler} and enter their rejection state immediately when these bound are violated \cite{Hartmanis65}. In a similar vein, it is possible to build into a spiking neural network $\mathcal{S}$ both a {\em meter} to monitor energy usage as well as a {\em timer} which counts down the allotted time steps, though they will not be part of our baseline assumption. Given an upper bound $e$ on the number of spikes, we can construct an energy counter neuron $E = (e,e,1)$ with synapses $S_{k,E} = (1,1)$ for all $k\in N$, and $S_{E,N_{\mathrm{acc}}} = (1, -\sum_j |w_{jN_{\mathrm{acc}}}|)$, $S_{E,N_{\mathrm{rej}}} = (1, T_{N_{\mathrm{acc}}} + \sum_j |w_{jN_{\mathrm{acc}}}|)$ where applicable. This ensures that if at some time step the permitted number of spikes has been reached without accepting or rejecting (which itself involves a spike from the corresponding neuron), from the next time step on the energy counter will inhibit the acceptance neuron and excite the rejection neuron if present.
Similarly, given an upper bound $t$ on the number of time steps, we can include a programmed timer neuron $T = (1,0,1)$ which fires once at the first time step, along with synapses $S_{T,N_{\mathrm{acc}}} = (t+1, -\sum_j |w_{jN_{\mathrm{acc}}}|)$, and $S_{T,N_{\mathrm{rej}}} = (t+1,  T_{N_{\mathrm{acc}}} + \sum_j |w_{jN_{\mathrm{acc}}}|)$ where applicable (Figure \ref{clock_meter}). Observe that these constructions add only two neurons, a proportionate number of synapses, and (in the presence of a rejection neuron) only a few additional spikes expended, hence the network size and in particular its construction time remain the same asymptotically. 

\begin{figure}[h]
\centering
\includegraphics[width=8cm]{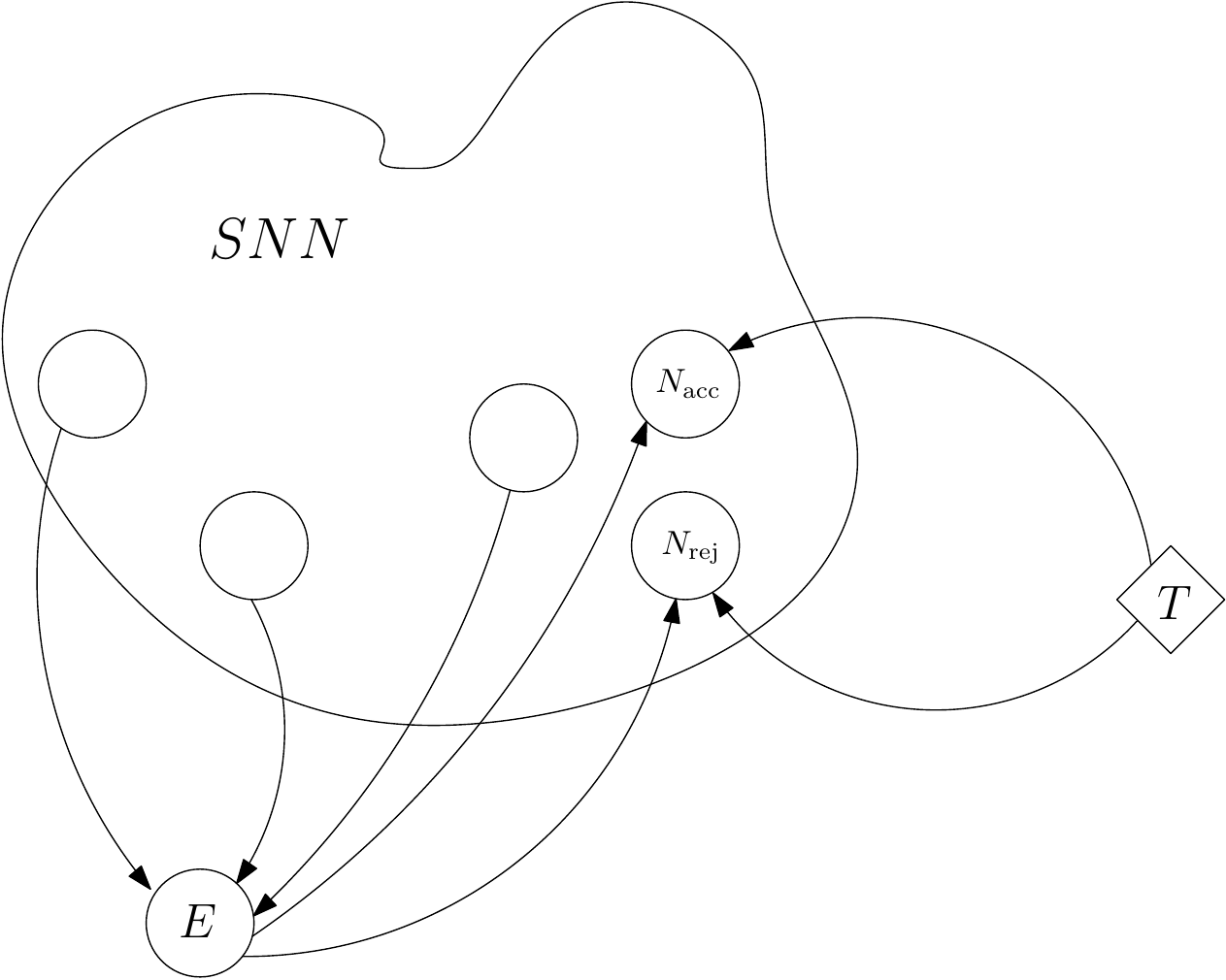}
\caption{Adding a timer and a meter to an arbitrary spiking neural network}
\label{clock_meter}
\rule{\columnwidth}{0.3mm}
\vspace{-1mm}
\end{figure}

\section{Structural complexity}
\label{Complexity_classes}

Now that we have specified what we mean by the resources $R_T$ and $R_S$, it is time to take a closer look at the class $\mathcal{S}(R_T,R_S)$, starting with some initial observations. To begin with, it makes little sense to allow the pre-processing to operate with at least as much resources as the spiking neural network, since otherwise the execution of the spiking neural network can be simulately classically; this remark is illustrated in Theorem~\ref{Plow} below. For this reason we typically choose $R_T$ to be only polynomial time and polynomial or even logarithmic space, corresponding to the classes $\mathsf{P}$ and $\mathsf{L}$ respectively. When the constraints $R_T$ are such that $\mathcal{M}(R_T)$ characterizes familiar complexity classes we will use the common notation for that class from here on; as an abuse of notation we will also use this notation as a shorthand for the resources $R_T$ themselves. 

\begin{theorem}\label{Plow}
${\mathcal{S}(\mathsf{P},R_S)} = \mathsf{P}$ whenever $R_S$ involves at most polynomial time constraints.
\end{theorem}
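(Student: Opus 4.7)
The plan is to establish both containments of the equality.

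For the easy direction $\mathsf{P} \subseteq \mathcal{S}(\mathsf{P},R_S)$, I would use precisely the trivial mapping highlighted in Section \ref{Resources}. Given $L \in \mathsf{P}$, let $\mathcal{M}$ be a polynomial-time decider for $L$; on input $i$, $\mathcal{M}$ first decides membership and then writes out the single-neuron network $(N_{\{\mathrm{acc}\}},\varnothing)$ or $(N_{\{\mathrm{rej}\}},\varnothing)$ accordingly. The resulting $\mathcal{S}_i$ uses no time steps and no spikes, so any well-behaved $R_S$ (even the most stringent) is respected.

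The substantive direction is $\mathcal{S}(\mathsf{P},R_S) \subseteq \mathsf{P}$, which amounts to showing that the entire two-phase computation can be simulated by a single polynomial-time Turing machine. Given $L \in \mathcal{S}(\mathsf{P},R_S)$, I would run $\mathcal{M}$ on $i$ for polynomial time to obtain the description of $\mathcal{S}_i$; since $\mathcal{M}$ writes the network onto its output tape, $|\mathcal{S}_i|$ is polynomial in $|i|$, and all parameters $(T_k,R_k,m_k,d_{kl},w_{kl})$ are rationals whose numerators and denominators have polynomially bounded bit size. Then I would simulate $\mathcal{S}_i$ step by step: at each discrete time step, iterate over all neurons, update the membrane potential using the formula $u_k(t) = m_k u_k(t-1) + \sum_j w_{jk} x_j(t - d_{jk})$, enforce non-negativity, compare to $T_k$, and record spikes (resetting to $R_k$ where a spike occurs). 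A spike buffer of length equal to the maximum delay suffices to handle synaptic delays. Each time step costs time polynomial in $|\mathcal{S}_i|$, and the assumption on $R_S$ guarantees only polynomially many time steps elapse before $N_{\mathrm{acc}}$ or $N_{\mathrm{rej}}$ fires (adding the timer/meter constructions from Section \ref{Resources} if needed to enforce halting).

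The main obstacle I anticipate is bookkeeping the bit complexity of the membrane potentials so that the simulation really is polynomial rather than merely finite. Repeated multiplication by a rational leakage constant $m_k = p/q$ over $t$ time steps can balloon denominators to $q^t$, with bit size $O(t \log q)$; so I need to observe that both $t$ and $\log q$ are polynomial in $|i|$, keeping the bit size of each $u_k(t)$ polynomial. The summation over incoming synapses introduces a common-denominator blow-up that is at most linear in the in-degree per step, which is again under control because the whole network is polynomially sized. Once this bit-complexity accounting is carried out, the classical simulation is seen to run in total polynomial time, and combined with the trivial direction this gives the stated equality.
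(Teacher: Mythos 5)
Your proof is correct and follows essentially the same route as the paper: the forward inclusion via the trivial per-instance accept/reject network, and the reverse inclusion by observing that polynomial-time preprocessing forces the generated network to be polynomially bounded in all resources, then simulating it time step by time step on a classical machine. The only difference is that you additionally account for the bit complexity of the membrane potentials (denominator growth under the rational leakage constant), a point the paper's proof passes over silently; this is a welcome refinement rather than a change of strategy.
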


\begin{proof}
As $\mathsf{P} \subseteq \mathcal{S}(\mathsf{P},R_S)$ is obvious, we focus on proving the inclusion in the other direction. The crucial observation is that for a Turing machine with polynomial time constraints it is impossible to construct a larger than polynomial network, rendering the space constraints actually imposed moot. Recalling our earlier observation that the energy consumption of a spiking neural network is upper bounded in terms of (the product of) its size and time constraints, this implies that the spiking neural network constructed is effectively polynomially bounded (or worse) on all resources. Now it suffices to show that a deterministic Turing machine can simulate in polynomial time the execution of a spiking neural network of polynomial size for at most polynomial time. This can be done by explicitly iterating over the neurons for every time step, determining whether they fire and scheduling the transmission of this spike along the outgoing synapses, until the network terminates or the time bounds are reached. By thus absorbing the decision procedure carried out by the network into the classical polynomial-time computation carried out by the machine we arrive at the stated inclusion.
\end{proof}

This theorem serves as a reminder that spiking neural networks are no magical devices: while there is a potential efficiency gain, mostly in terms of energy usage relative to computations on traditional hardware (only), neuromorphic computations with at most polynomial time constraints cannot achieve more than their classical counterparts. It remains to be determined to what extent the classes $\mathcal{S}(R_T,R_S)$ exhibit any hierarchical behavior based on the constraints $R_S$: in particular, it is still unclear whether there is an {\em energy hierarchy} analogous to the classical time hierarchy. We can however note that for well-defined resource contraints the classes $\mathcal{S}(R_T,R_S)$ are closed under operations such as intersection and complement, since spiking neural networks themselves are, so that decision procedures can be adjusted or combined at the network level.

Observe that using different resource constraints $R_T$ and $R_S$ we can define a lattice of complexity classes $\mathcal{S}(R_T,R_S)$, including such degenerate cases as $\mathcal{S}((\mathcal{O}(1),\mathcal{O}(1)),R_S)$ where the constructed network is only finitely dependent on the actual input (and thus can be constructed in constant time), and $\mathcal{S}(R_T,(\mathcal{O}(1),\mathcal{O}(1),\mathcal{O}(1))) = \mathcal{M}(R_T)$. It is therefore natural to consider the notions of reduction and hardness in this context, which is what we will do next.

\subsection{Completeness for $\mathcal{S}(R_T,R_S)$}

In order to arrive at a canonical complete problem for the class $\mathcal{S}(R_T,R_S)$, it makes sense to consider the analogy with other models of computation, where one asks whether the given procedure (be it machine, circuit or otherwise) accepts the provided input. Since even for the class $\mathcal{S}(R_T,R_S)$ it is not a spiking neural network but a Turing machine which controls how the input is handled, the resulting candidate for a complete problem for this class will involve the latter and not the former. This means that to distinguish this problem from its classical equivalent we must include the promise that the Turing machine is indeed of the kind associated with the class $\mathcal{S}(R_T,R_S)$, in that it generates an $R_S$-bounded spiking neural network using resources $R_T$\footnote{This construction is similar to the one required for the class $\mathsf{BPP}$ associated with probabilistic Turing machines.}. In other words, we claim that the following problem is complete under polynomial-time reductions for the promise version of the class $\mathcal{S}(R_T,R_S)$. 

\pproblem
{\SNNHalting}
{Turing machine $\mathcal{M}$ along with input string $i$}
{$\mathcal{M}$ is an ${\mathcal{S}(R_T,R_S)}$-machine}
{Does $\mathcal{M}$ accept $i$}

\begin{theorem} 
\SNNHalting\ is complete under polynomial-time reductions for the promise version of $\mathcal{S}(R_T,R_S)$.
\end{theorem}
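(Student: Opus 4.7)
The plan is to establish completeness by separately verifying membership in the promise class $\mathcal{S}(R_T,R_S)$ and hardness under polynomial-time many-one reductions; both directions should follow from a universal-machine argument of the kind familiar from the completeness of $\SNNHalting$-style problems for their classical analogues.

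For hardness, the reduction is essentially syntactic. Suppose $L \in \mathcal{S}(R_T,R_S)$, witnessed by some $R_T$-bounded Turing machine $\mathcal{M}_L$ that on input $i$ constructs an $R_S$-bounded spiking neural network $\mathcal{S}_i$ deciding $i \in L$. The reduction will map $i \mapsto (\langle\mathcal{M}_L\rangle, i)$, where $\langle\mathcal{M}_L\rangle$ is a fixed description of $\mathcal{M}_L$. Since $\langle\mathcal{M}_L\rangle$ has constant size in $|i|$, this mapping is computable in linear (hence polynomial) time; the output satisfies the promise because $\mathcal{M}_L$ is by assumption an $\mathcal{S}(R_T,R_S)$-machine, and it is a yes-instance of \SNNHalting\ iff $\mathcal{M}_L$ accepts $i$ iff $i \in L$.

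For membership, I would design a universal $\mathcal{S}(R_T,R_S)$-machine $\mathcal{U}$ which on input $(\langle\mathcal{M}\rangle, i)$ satisfying the promise invokes standard universal-Turing-machine simulation to run $\mathcal{M}$ on $i$ and emit exactly the spiking neural network $\mathcal{S}_i$ that $\mathcal{M}$ would produce. By the promise, the inner execution uses at most $R_T$ resources in $|i|$, and universal simulation adds only a constant (or at most logarithmic-in-$|\mathcal{M}|$) overhead in both time and space. Since $\langle\mathcal{M}\rangle$ contributes a constant to the input length and $R_T$ is well-behaved (closed under scalar and additive operations), the composed machine remains $R_T$-bounded on inputs of size $|i|$. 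The emitted network $\mathcal{S}_i$ is $R_S$-bounded because the promise guarantees this for any output of $\mathcal{M}$, so $\mathcal{U}$ is an $\mathcal{S}(R_T,R_S)$-machine deciding \SNNHalting.

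The main obstacle will be the bookkeeping around the universal-simulation overhead. For familiar choices such as $R_T = \mathsf{P}$ the argument is routine, since polynomial-time universal simulation is well known; for the more delicate case of $R_T = \mathsf{L}$ one has to appeal to the fact that a universal log-space machine can simulate any log-space machine with only constant multiplicative blow-up on the work tape, provided the simulated machine's description is read from the input rather than copied onto the work tape. This is precisely the place where the well-behavedness assumption on $R_T$ is genuinely used, and it is what prevents the theorem from being stated for arbitrary resource bounds rather than this well-structured family.
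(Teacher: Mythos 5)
Your proposal is correct and follows essentially the same route as the paper: membership via a universal machine that simulates $\mathcal{M}$ on $i$ and defers to the produced network, and hardness via the trivial syntactic reduction $i \mapsto (\langle\mathcal{M}_L\rangle, i)$ for a fixed $\mathcal{S}(R_T,R_S)$-machine $\mathcal{M}_L$ deciding $L$. The one detail the paper adds that you should too is the behaviour when the promise \emph{fails}: the universal simulation must be clocked so that it never exceeds the resource bounds on non-promise inputs (the answer returned there being irrelevant); also note that in the membership direction $\langle\mathcal{M}\rangle$ is part of the input and not of constant size, which is harmless since the bounds are measured against the full input length, but your phrasing borrows the ``constant-size description'' argument that only applies to the hardness direction.
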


\begin{proof}
Membership of this problem is established as follows: with a universal $\mathcal{S}(R_T,R_S)$ machine one can take the machine $\mathcal{M}$ and simulate it on the input $i$. If the machine $\mathcal{M}$ is indeed an $\mathcal{S}(R_T,R_S)$ machine as per the promise, then this simulation will succeed within the permitted resource bounds and we can simply return the answer given by $\mathcal{M}$. In case the promise fails to hold, we only need to ensure that the (unsuccessful) simulation does not exceeds the resource bounds, since it is otherwise irrelevant which response is ultimately given.
For the hardness of this problem, we observe that every problem in $\mathcal{S}(R_T,R_S)$ is by definition solvable by an $\mathcal{S}(R_T,R_S)$-machine, hence the straightforward reduction from any such problem to \SNNHalting\ consists of taking the input $i$ and passing it along to \SNNHalting\ accompanied by a particular $\mathcal{S}(R_T,R_S)$-machine which decides the problem.
\end{proof}

However, for particular assignments of $R_T$ we can actually replace the Turing machine by a spiking neural network and still end up with a complete (promise) problem. We will illustrate this construction for $R_T$ being linear time (and space); the same result also holds for $R_T$ corresponding to $\mathsf{P}$ and $\mathsf{L}$ under polynomial-time and logspace reductions respectively. 

\pproblem
{\SNNAccept}
{Network $\mathcal{S}$ along with input string $i$}
{$\mathcal{S}$ terminates within resource bounds $R_S$ expressed as a function of $|i|$}
{Does $\mathcal{S}$ accept}

\begin{theorem} 
\SNNAccept\ is complete under linear-time reductions for the promise version of\\$\mathcal{S}((\mathcal{O}(n),\mathcal{O}(n)),R_S)$.
\end{theorem}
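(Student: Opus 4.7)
The plan is to mirror the proof of the preceding theorem but with the ``program'' now being a spiking neural network rather than a Turing machine; the main change is that the pre-processing stage of an $\mathcal{S}((\mathcal{O}(n),\mathcal{O}(n)),R_S)$-machine for \SNNAccept\ need only copy $\mathcal{S}$ and wire the input $i$ into it, rather than run any real computation. Both membership and hardness then follow by essentially direct arguments.

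For membership I would exhibit a universal $\mathcal{S}((\mathcal{O}(n),\mathcal{O}(n)),R_S)$-machine as follows. On instance $(\mathcal{S},i)$ of combined encoded length $n$, the pre-processing Turing machine transcribes $\mathcal{S}$ onto its output tape and prepends a block of programmed input neurons whose spike trains encode $i$ in the convention of Figure~\ref{snn_legend}, routing each to the corresponding input neuron of $\mathcal{S}$. This transcription takes $O(n)$ time and space, and produces a network $\mathcal{S}'$ of size $O(n)$ that is operationally $\mathcal{S}$ fed with $i$. By the promise, $\mathcal{S}'$ halts within $R_S$ as a function of $|i|$ and accepts precisely when the original $\mathcal{S}$ would accept $i$; if the promise fails, the built-in timer and meter described in Section~\ref{Resources} force a halt within the same bounds and, following the promise-problem convention, the answer returned is irrelevant.

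For hardness, let $L$ belong to the promise version of $\mathcal{S}((\mathcal{O}(n),\mathcal{O}(n)),R_S)$ via a linear-time, linear-space Turing machine $\mathcal{M}_L$ which on input $i$ produces a network $\mathcal{S}_i$ of size $O(|i|)$ respecting $R_S$. The reduction takes $i$, simulates $\mathcal{M}_L$ on $i$ in $O(|i|)$ steps to emit $\mathcal{S}_i$, and outputs the pair $(\mathcal{S}_i, i)$ (retaining $i$ on the second component so that the resource bounds, being functions of $|i|$, are the expected ones). The promise for \SNNAccept\ is inherited from $\mathcal{M}_L$ being an $\mathcal{S}((\mathcal{O}(n),\mathcal{O}(n)),R_S)$-machine, the overall reduction clearly runs in linear time, and correctness is immediate since $\mathcal{S}_i$ accepts iff $i \in L$.

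The main obstacle I anticipate is purely bookkeeping: verifying that ``wiring in $i$'' as a block of programmed spike-train neurons really is a linear-time, linear-space operation on the pre-processing Turing machine, and that the extra neurons, synapses, and spikes only inflate the $R_S$ bounds by constant factors. This is exactly the kind of asymptotic preservation already argued in Section~\ref{Resources} when attaching a timer and a meter to an arbitrary network. Once that check is dispatched, the theorem reduces to a straightforward adaptation of the proof of the previous theorem with ``Turing machine $\mathcal{M}$'' replaced throughout by the pair ``network $\mathcal{S}$ together with input string $i$''.
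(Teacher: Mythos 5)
Your hardness argument is essentially the paper's: simulate the linear-time machine on $i$ to obtain $\mathcal{S}_i$ and output the pair $(\mathcal{S}_i,i)$, with the promise inherited and $i$ retained only so that the $R_S$ bounds remain functions of $|i|$. (The paper reduces from \SNNHalting\ rather than from an arbitrary $L$ in the class, but since the previous theorem already establishes that problem as complete, the two formulations are interchangeable.)

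The membership half, however, rests on a misreading of the problem. In \SNNAccept\ the question is ``Does $\mathcal{S}$ accept'' --- not ``Does $\mathcal{S}$ accept $i$.'' The network $\mathcal{S}$ in an instance is already self-contained (in this model the input is encoded in the network itself, which is exactly what the hardness direction produces: $\mathcal{S}_i$ already has $i$ baked in), and the string $i$ appears in the instance only so that the promise ``$\mathcal{S}$ terminates within $R_S$ as a function of $|i|$'' is well-posed. The paper's membership proof is therefore a one-liner: the pre-processing machine \emph{discards} $i$ and outputs $\mathcal{S}$ verbatim, which by the promise is $R_S$-bounded and accepts iff $\mathcal{S}$ does, trivially. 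Your construction instead prepends programmed spike-train neurons encoding $i$ and routes them into $\mathcal{S}$, producing a modified network $\mathcal{S}'$. This answers a different question: an arbitrary $\mathcal{S}$ need not have designated input neurons to route into, and injecting additional spikes can change whether the acceptance neuron fires, so ``$\mathcal{S}'$ accepts'' need not coincide with ``$\mathcal{S}$ accepts.'' Worse, composed with your own hardness reduction it would feed $i$ into $\mathcal{S}_i$ twice. The fix is immediate --- delete the wiring step --- and then your proof coincides with the paper's; the bookkeeping worry you flag at the end (that wiring in $i$ preserves the asymptotic bounds) simply evaporates, since nothing needs to be wired in.
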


\begin{proof}
Membership follows from the observation that a Turing machine can in linear time discard the input string $|i|$, such that what it is left with is a network promised to be $R_S$-constrained that accepts precisely when $\mathcal{S}$ does as it is $\mathcal{S}$ itself.
To prove hardness we reduce $\mathcal{S}((\mathcal{O}(n),\mathcal{O}(n)),R_S)$-{\sc Halting} to \SNNAccept. Let $(\mathcal{M},i)$ be an instance of the former. By simulating the application of $\mathcal{M}$ on $i$ and replacing it with the resulting network $\mathcal{S}_i$ (which by the promise can be done in linear time), we obtain an instance $(\mathcal{S}_i,i)$ of $\mathcal{S}((\mathcal{O}(n),\mathcal{O}(n)),R_S)$-{\sc Network Halting} where the promise for $\mathcal{S}_i$ is inherited from that for $\mathcal{M}$ and the decision of $\mathcal{S}_i$ is that of $\mathcal{M}$ on $i$ by definition.
\end{proof}

This completeness result shows that for those choices of $R_T$ that we were likely to consider anyways (cf.~the remark at the beginning of this section) we are justified in taking spiking neural networks as computationally primitive in a sense relevant for our treatment. In particular, this allows us to round off our discussion by exploring the interactive model of neuromorphic computation.

\subsection{Interactive computation}

We will formalize the interactive model of neuromorphic computation in terms of Turing machines equipped with an oracle for the relevant class of spiking neural networks. This involves augmenting a deterministic Turing machine with a {\em query tape}, an {\em oracle-query} state, and an {\em oracle-result} state. We can then select the problem $\mathcal{S}(R_T,R_S)$-{\sc Network Halting} for our choice of $R_T$ and $R_S$ to serve as an oracle to our machine. Now when a machine with such an oracle enters the oracle-query state with $(\mathcal{S},i)$ on its query tape it proceeds to the oracle-result state, at which point it will replace the contents with $1$ if $\mathcal{S}$ accepts and with $0$ if $\mathcal{S}$ rejects (given that the promise holds; the outcome otherwise returned is unspecified).
With a slight abuse of notation, we can thus define $\mathcal{M}(R_{T'})^{\mathcal{S}(R_T,R_S)}$ to be the class of decision problems that can be solved by a Turing machine with resource constraints $R_{T'}$ equipped with an oracle for $\mathcal{S}(R_T,R_S)$-{\sc Network Halting}. It follows immediately that $\mathcal{M}(R_{T'})^{\mathcal{S}(R_T,R_S)}$ is a superclass of $\mathcal{S}(R_{T'},R_S)$, though again the exact relations between these two kinds of classes and between these neuromorphic complexity classes and the classical complexity classes remain to be determined. In closing we can however offer an example of a potential use for the interactive model of neuromorphic computation.

\begin{example}
Suppose we are interested in the behavior of $\mathsf{P}$-complete problems on neuromorphic oracle Turing machines. Given that such problems are assumed to be inherently serial and cannot be computed with only a logarithmic amount of working memory, one might suggest to look at a suitable trade-off between computations on a regular machine and on a neuromorphic device. One way of doing this would be to constrain the working memory for the Turing machine to be logarithmic in the input size, so that $\mathcal{M}(R_T)$ characterizes the complexity class $\mathsf{L}$. Then if all resources $R_S$ are linear in the size of the input, we obtain the complexity class $\mathsf{L}^{\mathcal{S}(\mathcal{O}(n),\mathcal{O}(n),\mathcal{O}(n))}$. In a related paper we will show that indeed the $\mathsf{P}$-complete \NwFlow\ problem resides in this class \cite{Ali19}.
\end{example}

\section{Conclusion}
\label{Conclusion}

In this paper we proposed a machine model to assess the potential of neuromorphic architectures with energy as a vital resource in addition to time and space. We introduced a hierarchy of computational complexity classes relative to these resources and provided some first structural results and canonical complete problems for these classes.

We already hinted at some future structural complexity work, most urgently an energy-analogue for the time complexity hierarchy and a notion of amortization of resources. The latter is crucial when considering {\em local changes} to the network, such as adapting the weights when learning, or when using a network with a set of spike trains rather than recreating everything from scratch.

In addition, providing concrete hardness proofs, as well as populating classes using neuromorphic algorithms, should be high on the agenda for the neuromorphic research community.

\bibliographystyle{ACM-Reference-Format}
\bibliography{SNN_complexity}

\end{document}